\DeclareMathOperator*{\bigtimes}{\vartimes}
\newtheorem{proposition}{Proposition}
\definecolor{darkgreen}{RGB}{0,125,0}
\icmltitlerunning{Anytime PSRO for Two-Player Zero-Sum Games}
\begin{document}

\twocolumn[
\icmltitle{Anytime PSRO for Two-Player Zero-Sum Games}

% It is OKAY to include author information, even for blind
% submissions: the style file will automatically remove it for you
% unless you've provided the [accepted] option to the icml2021
% package.

% List of affiliations: The first argument should be a (short)
% identifier you will use later to specify author affiliations
% Academic affiliations should list Department, University, City, Region, Country
% Industry affiliations should list Company, City, Region, Country

% You can specify symbols, otherwise they are numbered in order.
% Ideally, you should not use this facility. Affiliations will be numbered
% in order of appearance and this is the preferred way.

\begin{icmlauthorlist}
\icmlauthor{Stephen McAleer}{cmu,dm,uci}
\icmlauthor{Kevin Wang}{uci}
\icmlauthor{John Lanier}{uci}
\icmlauthor{Marc Lanctot}{dm}
\icmlauthor{Pierre Baldi}{uci}
\icmlauthor{Tuomas Sandholm}{cmu}
\icmlauthor{Roy Fox}{uci}
\end{icmlauthorlist}

\icmlaffiliation{cmu}{Carnegie Mellon University}
\icmlaffiliation{uci}{University of California, Irvine}
\icmlaffiliation{dm}{DeepMind}

\icmlcorrespondingauthor{Stephen McAleer}{mcaleer.stephen@gmail.com}

% You may provide any keywords that you
% find helpful for describing your paper; these are used to populate
% the "keywords" metadata in the PDF but will not be shown in the document
\icmlkeywords{Machine Learning, ICML}

\vskip 0.3in
]

% this must go after the closing bracket ] following \twocolumn[ ...

% This command actually creates the footnote in the first column
% listing the affiliations and the copyright notice.
% The command takes one argument, which is text to display at the start of the footnote.
% The \icmlEqualContribution command is standard text for equal contribution.
% Remove it (just {}) if you do not need this facility.

\printAffiliationsAndNotice{}  % leave blank if no need to mention equal contribution
% % \printAffiliationsAndNotice{\icmlEqualContribution} % otherwise use the standard text.

\begin{abstract}
\emph{Policy space response oracles (PSRO)} is a multi-agent reinforcement learning algorithm that has achieved state-of-the-art performance in very large two-player zero-sum games. PSRO is based on the tabular \emph{double oracle (DO)} method, an algorithm that is guaranteed to converge to a Nash equilibrium, but may increase exploitability from one iteration to the next. We propose \emph{anytime double oracle (ADO)}, a tabular double oracle algorithm for 2-player zero-sum games that is guaranteed to converge to a Nash equilibrium while decreasing exploitability from one iteration to the next. Unlike DO, in which the restricted distribution is based on the restricted game formed by each player's strategy sets, ADO finds the restricted distribution for each player that minimizes its exploitability against any policy in the full, unrestricted game.  We also propose a method of finding this restricted distribution via a no-regret algorithm updated against best responses, called \emph{RM-BR DO}. Finally, we propose \emph{anytime PSRO (APSRO)}, a version of ADO that calculates best responses via reinforcement learning. In experiments on Leduc poker and random normal form games, we show that our methods achieve far lower exploitability than DO and PSRO and decrease exploitability monotonically.
\end{abstract}

\section{Introduction}
Policy Space Response Oracles is a multi-agent reinforcement learning (RL) method that is based on the tabular double oracle (DO) algorithm~\cite{double_oracle} for finding an approximate \emph{Nash equilibrium (NE)} in two-player zero-sum games. Methods based on PSRO have achieved state-of-the-art performance on large imperfect-information two-player zero-sum games such as Starcraft~\cite{alphastar} and Stratego~\cite{mcaleer2020pipeline}. 

In PSRO, each player maintains, and adds to, a population of policies. 
% Given expected payoff outcomes between these policies, a normal-form restricted game can be represented. PSRO outputs the mixed-strategy NE solution of this restricted game, called the meta-Nash distribution, which is a distribution over population policies to be sampled in each episode. In each iteration of the algorithm, PSRO trains a best response to the meta-Nash distribution over each player's opponent policy population and then adds the newly trained best responses to both players' populations. Each time policies are added to the players' populations, expected payoffs between individual policies are estimated via rollouts, and a new meta-Nash distribution is computed to solve the normal-form restricted game over these population policies and their payoffs.\todo{the above is too long for an intro}
% The meta-Nash distribution is calculated by creating an empirical normal form game where each row or column corresponds to a policy in that player's population and the outcomes are estimated via rollouts.
PSRO can be run until convergence, in which case it outputs a restricted distribution over population policies which corresponds to a NE solution to the original game. In practice, however, PSRO is terminated early in large games. This can be a problem because the PSRO restricted distribution over the population policies is not guaranteed to decrease in exploitability every iteration. As a result, if PSRO is terminated early, the final restricted distribution could even be more exploitable than the initial one.

In this paper, we propose a new PSRO variant that, in each iteration, finds the least-exploitable restricted distribution over the population policies of each player. We present two versions of our algorithm: an exact version (called \emph{anytime double oracle (ADO)}), and an RL version (called \emph{anytime PSRO (APSRO)}).
% approximation of the least exploitable distribution over population policies.
% In APSRO, instead of calculating a meta-Nash distribution, we run a no regret algorithm over the population policies.  

\emph{Anytime double oracle (ADO)} is a modification of \emph{range of skill (ROS)}~\cite{zinkevich07ros} that finds a restricted Nash equilibrium over two restricted games: one per player. Each player's restricted game is defined such that their strategies are restricted to be within their population, but the opponent is unrestricted. ADO adds a best response to each player's restricted distribution to the other player's population. ADO is guaranteed to not increase exploitability from one iteration to the next and is guaranteed to converge to a Nash equilibrium in a number of iterations at most equal to the number of pure strategies in the game.  

Next, we show that an existing method, \emph{regret minimization against a best response (RM-BR)} \cite{johanson12cfrbr}, can be used to calculate the least exploitable restricted distribution in a double oracle algorithm via regret minimization against a best response. This algorithm, which we refer to as RM-BR DO, can efficiently compute a least-exploitable restricted distribution because without solving the large restricted game. However, in large games where computing best responses via RL is necessary, both of these methods becomes prohibitively expensive.

\emph{Anytime Policy Space Response Oracles (APSRO)} solves this problem by updating the restricted distribution using a no-regret algorithm trained against a single best response that is itself being continually trained via reinforcement learning against the restricted distribution. Although this method lacks theoretical guarantees enjoyed by the exact methods, we find that in practice it can perform much better than PSRO, and tends to not increase exploitability. To the best of our knowledge, ADO is the first double oracle method that both (1) is proven to converge to a Nash equilibrium in a number of iterations bounded by the number of pure strategies of the game; and (2) does not increase exploitability from one iteration to the next. APSRO scales up ADO to large games using reinforcement learning. 

To summarize, our contributions are as follows:
\begin{itemize}
    \item We present ADO, a double oracle algorithm that is guaranteed to not increase exploitability from one iteration to the next. 
    \item We present RM-BR DO, a method for efficiently finding the least-exploitable restricted distribution in ADO via a no-regret procedure against best responses. 
    \item We present APSRO, a reinforcement learning version of ADO that outperforms PSRO in all of our experiments and, unlike PSRO, tends to not increase exploitability.
\end{itemize}

\section{Background}
% \subsection{Extensive-Form Games}
We consider extensive-form games with perfect recall \citep{hansen2004dynamic}. An extensive-form game progresses through a sequence of player actions, and has a \emph{world state} $w \in \mathcal{W}$ at each step. %, and some of it may be hidden from some of the players. 
In an $N$-player game, $\mathcal{A} = \mathcal{A}_1 \times \cdots \times \mathcal{A}_N$ is the space of joint actions for the players. $\mathcal{A}_i(w) \subseteq \mathcal{A}_i$ denotes the set of legal actions for player $i \in \mathcal{N} = \{1, \ldots, N\}$ at world state $w$ and $a = (a_1, \ldots, a_N) \in \mathcal{A}$ denotes a joint action. At each world state, after the players choose a joint action, a transition function $\mathcal{T}(w, a) \in \Delta^\mathcal{W}$ determines the probability distribution of the next world state $w'$. Upon transition from world state $w$ to $w'$ via joint action $a$, player $i$ makes an \emph{observation} $o_i = \mathcal{O}_i(w,a,w')$. In each world state $w$, player $i$ receives a reward $\mathcal{R}_i(w)$. The game ends when the players reach a terminal world state. In this paper, we consider games that are guaranteed to end in a finite number of actions.

A \emph{history} is a sequence of actions and world states, denoted $h = (w^0, a^0, w^1, a^1, \ldots, w^t)$, where $w^0$ is the known initial world state of the game. $\mathcal{R}_i(h)$ and $\mathcal{A}_i(h)$ are, respectively, the reward and set of legal actions for player $i$ in the last world state of a history $h$. An \emph{information set} for player $i$, denoted by $s_i$, is a sequence of that player's observations and actions up until that time $s_i(h) = (a_i^0, o_i^1, a_i^1, \ldots, o_i^t)$. Define the set of all information sets for player $i$ to be $\mathcal{I}_i$. %The unique information set containing history $h$ for player $i$ is denoted $s_i(h)$. 
The set of histories that correspond to an information set $s_i$ is denoted $\mathcal{H}(s_i) = \{ h: s_i(h) = s_i \}$, and it is assumed that they all share the same set of legal actions $\mathcal{A}_i(s_i(h)) = \mathcal{A}_i(h)$. 

A player's \emph{strategy} $\pi_i$ 
% \in (\Delta^{\mathcal{A}_i})^{\mathcal{I}_i}$ 
is a function mapping from an information set to a probability distribution over actions. A \emph{strategy profile} $\pi$ is a tuple $(\pi_1, \ldots, \pi_N)$. All players other than $i$ are denoted $-i$, and their strategies are jointly denoted $\pi_{-i}$. A strategy for a history $h$ is denoted $\pi_i(h) = \pi_i(s_i(h))$ and $\pi(h)$ is the corresponding strategy profile. % = (\pi_1(s_1(h)), \ldots,\pi_N(s_N(h)))$. 
% We also define the transition function $\mathcal{T}(h, a_i, \pi_{-i}) \in \Delta^{\mathcal{W}}$ as a function drawing actions for $-i$ from $\pi_{-i}$ to form $a = (a_i, a_{-i})$ and to then sample the next world state $w'$ from $\mathcal{T}(w, a)$, where $w$ is the last world state in $h$. 
When a strategy $\pi_i$ is learned through RL, we refer to the learned strategy as a \emph{policy}.

The \emph{expected value (EV)} $v_i^{\pi}(h)$ for player $i$ is the expected sum of future rewards for player $i$ in history $h$, when all players play strategy profile $\pi$. The EV for an information set $s_i$ is denoted $v_i^{\pi}(s_i)$ and the EV for the entire game is denoted $v_i(\pi)$. A \emph{two-player zero-sum} game has $v_1(\pi) + v_2(\pi) = 0$ for all strategy profiles $\pi$. The EV for an action in an information set is denoted $v_i^{\pi}(s_i,a_i)$. A \emph{Nash equilibrium (NE)} is a strategy profile such that, if all players played their NE strategy, no player could achieve higher EV by deviating from it. Formally, $\pi^*$ is a NE if $v_i(\pi^*) = \max_{\pi_i}v_i(\pi_i, \pi^*_{-i})$ for each player $i$.

The \emph{exploitability} $e(\pi)$ of a strategy profile $\pi$ is defined as $e(\pi) = \sum_{i \in \mathcal{N}} \max_{\pi'_i}v_i(\pi'_i, \pi_{-i})$. A \emph{best response (BR)} strategy $\mathbb{BR}_i(\pi_{-i})$ for player $i$ to a strategy $\pi_{-i}$ is a strategy that maximally exploits $\pi_{-i}$: $\mathbb{BR}_i(\pi_{-i}) = \arg\max_{\pi_i}v_i(\pi_i, \pi_{-i})$. An \emph{$\bm{\epsilon}$-best response ($\bm{\epsilon}$-BR)} strategy $\mathbb{BR}^\epsilon_i(\pi_{-i})$ for player $i$ to a strategy $\pi_{-i}$ is a strategy that is at most $\epsilon$ worse for player $i$ than the best response: $v_i(\mathbb{BR}^\epsilon_i(\pi_{-i}), \pi_{-i}) \ge v_i(\mathbb{BR}_i(\pi_{-i}), \pi_{-i}) - \epsilon$. An \emph{$\bm{\epsilon}$-Nash equilibrium ($\bm{\epsilon}$-NE)} is a strategy profile $\pi$ in which, for each player $i$, $\pi_i$ is an $\epsilon$-BR to $\pi_{-i}$. %with $e(\pi) \le \epsilon$. 

A \emph{normal-form game} is a single-step extensive-form game. An extensive-form game induces a normal-form game in which the legal actions for player $i$ are its deterministic strategies $\bigtimes_{s_i \in \mathcal{I}_i} \mathcal{A}_i(s_i)$. These deterministic strategies are called \emph{pure strategies} of the normal-form game. A \emph{mixed strategy} is a distribution over a player's pure strategies. 

\section{Related Work}
% There has been much recent work on non-game-theoretic multi-agent RL \cite{foerster2018counterfactual, lowe2017multi, rashid2018qmix, bansal2017emergent}. Most of this work focuses on games with more than two players such as multi-agent cooperative games or mixed competitive-cooperative scenarios. In cooperative environments, self-play has empirically been shown to find an approximate NE \cite{lowe2017multi, majumdar2020evolutionary}, but can be brittle when cooperating with agents it hasn't trained with \cite{psro}. Self-play reinforcement learning has achieved expert level performance on video games \cite{alphastar, dota, pbt}, but is not guaranteed to converge to an approximate NE.

% Extensive-form fictitious play (XFP) \citep{xfp} and counterfactual regret minimization (CFR) \citep{cfr} extend fictitious play and regret matching, respectively, to extensive-form games.
% Deep CFR \citep{deep_cfr} is a general method that trains a neural network on a buffer of counterfactual values. However, Deep CFR uses external sampling, which may be impractical for games with a large branching factor such as Stratego and Barrage Stratego. DREAM \citep{steinberger2020dream} and ARMAC \citep{gruslys2020advantage} are model-free regret-based deep learning approaches.

\subsection{Double Oracle (DO) and Policy Space Response Oracles (PSRO)}
Double Oracle~\citep{double_oracle} is an algorithm for finding a NE in normal-form games. The algorithm works by keeping a population of strategies $\Pi^t$ at time $t$. Each iteration a NE $\pi^{*,t}$ is computed for the game restricted to strategies in $\Pi^t$. Then, a best response to this NE for each player $\mathbb{BR}_i(\pi^{*,t}_{-i})$ is computed and added to the population $\Pi_i^{t+1} = \Pi_i^t \cup \{\mathbb{BR}_i(\pi^{*,t}_{-i}) \}$ for $i \in \{1, 2\}$. The DO algorithm is described in Algorithm \ref{double_oracle}. Although in the worst case DO must expand all pure strategies, in many games DO empirically terminates early and outperforms existing methods. An interesting open problem is characterizing games where DO will outperform existing methods.

\begin{algorithm}[tb]
   \caption{Double Oracle \cite{double_oracle}}
   \label{double_oracle}
\begin{algorithmic}
 \STATE {\bfseries Result:} Nash Equilibrium
   \STATE {\bfseries Input:} Initial population $\Pi^0$
   \REPEAT[for $t=0,1,\ldots$]
   \STATE $\pi^r \gets$ NE in game restricted to strategies in $\Pi^t$
  \FOR{$i \in \{1,2\}$}
   \STATE Find a best response $\beta_i \gets \mathbb{BR}_i(\pi^r_{-i})$
   \STATE $\Pi^{t+1}_i \gets \Pi^t_i \cup \{\beta_i\}$
  \ENDFOR
  \UNTIL{No novel best response exists for either player}
   \STATE {\bfseries Return:} $\pi^r$
\end{algorithmic}
\end{algorithm}

% \begin{algorithm}
% \SetAlgoLined
% \DontPrintSemicolon
% \KwResult{Nash Equilibrium}
%  Input: initial population $\Pi^0$\;
%  \While{Not terminated}{
%   Solve game restricted to policies in $\Pi^t$ to get meta-distribution $\pi^r$\;
%   \For{$i \in \{1,2\}$}{
%       Find a best response $\mathbb{BR}_i(\pi^r_{-i})$\;
%       $\Pi^{t+1}_i = \Pi^t_i \cup \mathbb{BR}_i(\pi^r_{-i})$\;
%   } 
%   \If{No novel best response exists for both players}{
% %   \If{$gv(G(\Pi_0, all possible p1 strategies)) = gv(G(all possible p0 strategies, \Pi_1))$}
%     Return $\pi^r$\;
%   }
%  }
%  \caption{Double Oracle}
% \end{algorithm}

Policy-Space Response Oracles (PSRO) \cite{psro} approximates the Double Oracle algorithm. The restricted-game NE is computed on the empirical game matrix $U^\Pi$, generated by having each policy in the population $\Pi$ play each opponent policy and tracking average utility in a $\Pi_1 \times \Pi_2$ payoff matrix \citep{wellman2006methods}. In each iteration, an approximate best response to the current restricted NE over the policies is computed via any RL algorithm. 

% In RM-BR DO, our inner loop is different in that the player is only producing best responses against an adversarial learner rather than fully solving the larger sub-games. The inner loop in RM-BR DO and APSRO is therefore similar to CFR-BR~\cite{johanson12cfrbr} where the regret-minimizing player plays against a worst-case opponent, but with the different purpose of retrieving the final best response to augment the meta-game. Existing approaches to improving PSRO could potentially be combined with our method \citep{mcaleer2021xdo, balduzzi2019open, smith2021iterative, feng2021discovering, mcaleer2020pipeline, dinh2021online}. 

\subsection{Minimum-Regret Constrained Profile}
The concept of finding a low-exploitability distribution in a restricted game was also explored in \citet{jordan2010strategy} and \citet{wang2021evaluating}, where they define the \emph{minimum regret constrained profile} as the distribution over a restricted population that achieves the lowest exploitability. In this paper, we use the term \emph{least-exploitable restricted distribution} instead of minimum regret constrained profile. We demonstrate that RM-BR \cite{johanson12cfrbr} can be used to calculate the least-exploitable restricted distribution, and we incorporate this method into a DO algorithm that converges to an approximate Nash equilibrium through our algorithm RM-BR DO. We also present APSRO, which extends RM-BR DO to large games through RL. 
% We are also the first to scale such a method to large games via deep reinforcement learning through our algorithm APSRO. 

\subsection{Range of Skill (ROS)}
This paper presents ADO, a modification to the \emph{range of skill (ROS)} algorithm introduced in \cite{zinkevich07ros} and further explored in \cite{hansen08onros}. 
% Like Double Oracle, ROS produces a series of restricted games by iteratively adding new strategies. Each new strategy in ROS is a {\it generalized best response} to the previous subgame: a strategy that is part of a Nash equilibrium strategy profile in the larger subgame where one of the players is unrestricted. 
ROS is a variant of the DO algorithm that produces a series of restricted games by iteratively adding new strategies. 
As in ADO, ROS defines two seperate restricted games where one player is restricted to play strategies in their population while the other player is unrestricted. Also as in ADO, the restricted distribution is part of a Nash equilibrium strategy profile for the restricted game and is therefore the least-exploitable restricted distribution. However, ADO and ROS differ in the strategy that they add to the unrestricted player’s population.  ADO adds a best response to a restricted Nash equilibrium, while ROS adds a strategy that is part of a restricted NE.

This difference proves crucial when scaling up to large games. In large games, solving the restricted game where one player is unrestricted is infeasible and as a result methods based on ROS cannot scale to large games. Alternatively, since ADO only adds best responses, it naturally scales to large games via APSRO where the best responses are learned through RL. Additionally, while ROS, like ADO, decreases exploitability monotonically and performs well in practice, the only known convergence guarantees for ROS are asymptotic with a convergence rate that is exponential in the size of the game \cite{hansen08onros}. Conversely, ADO is guaranteed to converge in a number of iterations at most the number of pure strategies in a game. Finally, ROS is only guaranteed to reach an $\epsilon$-Nash, not an exact NE, whereas ADO reaches an exact NE upon termination.  

\begin{figure*}[ht]
    \centering
    \includegraphics[width=\textwidth]{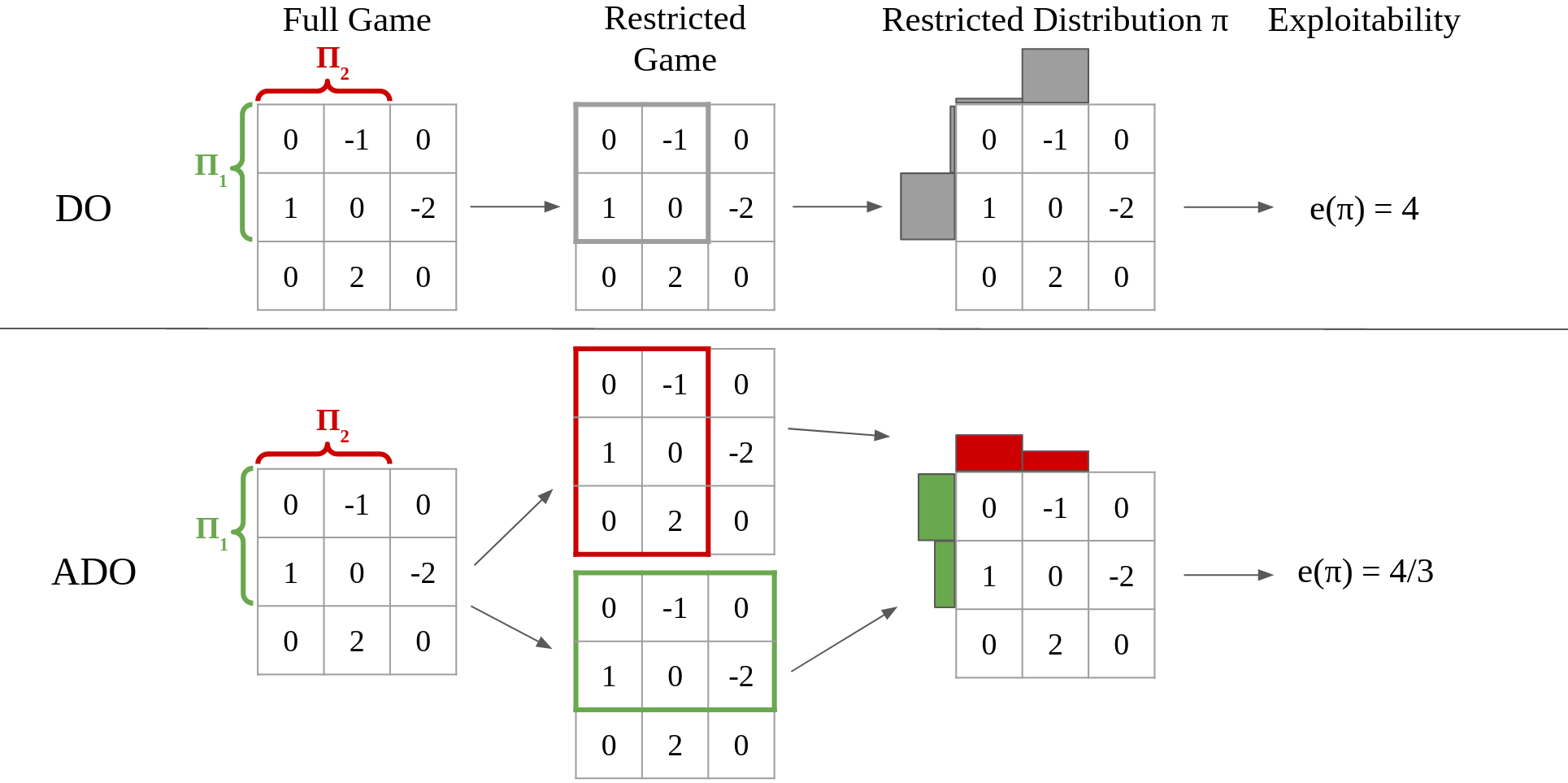}
    \caption{In DO, a single restricted game is created and solved. Since this restricted game does not consider strategies outside of the population, it can lead to exploitable restricted distributions, as shown in the top figure. Conversely, ADO creates two restricted games where the opponent is unrestricted. These restricted games are then solved which results in the least-exploitable restricted distributions for both players.}
    \label{fig:ADO_Diagram}
\end{figure*}

\section{Anytime Double Oracle (ADO)}
% \begin{itemize}
%     \item Define restricted game mathematically to be different for both players, one player is restricted to population, other is full game (referenced in algorithm)
%     \item Describe algorithm in words
%     \item prove theorems
% \end{itemize}

The double oracle (DO) algorithm is guaranteed to converge because in the worst case it must expand all pure strategies, at which point it terminates at a Nash equilibrium (NE). Unfortunately, before convergence, there is no guarantee on the exploitability of the restricted-game NE. In fact, DO can increase exploitability from one iteration to the next. 

To see this, consider the game in Figure \ref{fig:ADO_Diagram}. If both players start with a population consisting only of the first strategy (top row and left column), then the best response for each player is the second strategy, giving that player value 1 
% (the column player receives the negative of the depicted value)
, for a total exploitability of 2. However, in the next iteration DO will include both the first and second strategies in the population for both players, and the restricted-game NE will give probability 1 to the second strategy. This restricted NE has exploitability of 4. We also provide a generalization to this bad case in Section \ref{sec:exp} and show empirically that DO does indeed arbitrarily increase exploitability before terminating in this class of games.   

In this paper, we introduce Anytime Double Oracle (ADO), which is guaranteed to not increase exploitability from one iteration to the next. Like DO, ADO maintains a population $\Pi^t$ at every timestep $t$. Also like DO, in every iteration a Nash equilibrium is computed on a restricted game and a best response to this restricted NE is computed and added to both populations. However, unlike DO, a different restricted game is created for each player. The restricted game $G^i$ for player $i$ is created by restricting that player to only play strategies included in their population $\Pi_i$, while the opponent can play any strategy in the full game. The game value of $G^i$ for player $i$ is 
\begin{equation}
\label{restricted_game}
\max_{\pi_i \in \Pi_i}\min_{\pi_{-i}}v_i(\pi_i, \pi_{-i}).   
\end{equation}
The restricted game $G^i$ for player $i$ is then solved for both players to get a restricted NE $(\pi^i_1, \pi^i_2)$ for both players. We refer to the restricted player's mixed strategy as the restricted NE $\pi^r_i = \pi^i_i$. The restricted NE for player $i$ is the least exploitable mixed strategy supported by player $i$'s population. Note that in large games this restricted game will be prohibitively large to solve, and will require APSRO, introduced later in this paper.
% \todo{not sure we need this sentence here}

Next, a best response $\beta_i = \mathbb{BR}(\pi^r_{-i})$ is computed for each player $i$ against the restricted-NE mixed strategy of the restricted opponent, and is added to the player's population. If there are multiple best responses, a novel best response $\beta_i \not\in \Pi_i$ is chosen that is not currently in that player's population. ADO is described in Algorithm \ref{ADO}.

\begin{algorithm}
 \caption{Anytime Double Oracle (ADO)}
 \label{ADO}
 \begin{algorithmic}
 \STATE {\bfseries Result:} Nash Equilibrium
 \STATE {\bfseries Input:} Initial population $\Pi^0$
 \REPEAT[for $t=0,1,\ldots$]
 \STATE $\pi^r_i \gets$ NE in restricted game $G^i$ (eq. (\ref{restricted_game})), for $i \in \{1, 2\}$
 \FOR{$i \in \{1,2\}$}
 \STATE Find a novel best response $\beta_i \gets \mathbb{BR}_i(\pi^r_{-i})$
 \STATE $\Pi^{t+1}_i = \Pi^t_i \cup \{ \beta_i \}$
 \ENDFOR 
 \UNTIL{No novel best response exists for either player}
 \STATE {\bfseries Return:} $\pi^r$
 \end{algorithmic}
\end{algorithm}

ADO is guaranteed to terminate because there are finitely many pure strategies in the original game. When ADO terminates, the restricted NE is a Nash equilibrium in the original game (Proposition \ref{prop:converge}). Unlike DO, the exploitability of the restricted NE does not increases from one iteration to the next (Proposition \ref{prop:monotonic}).

\begin{figure*}[ht]
    \centering
    \includegraphics[width=\textwidth]{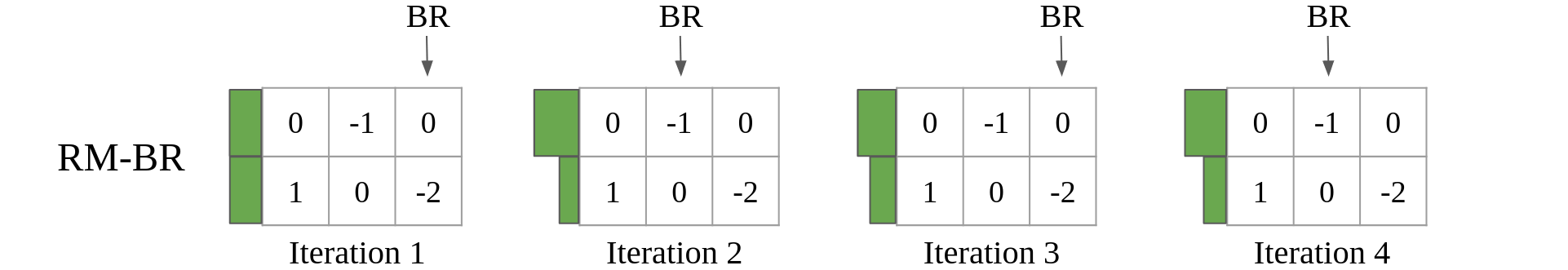}
    \caption{RM-BR converges to the Nash equilibrium of the restricted game by updating the restricted distribution via a no-regret algorithm against a best response at every iteration. Here the best response to the current restricted distribution is indicated by the arrow and the restricted distribution is in green. RM-BR allows us to produce a low-exploitability restricted distribution without solving the full restricted game. This basic idea is the core of APSRO as well, where we use the best response that is being trained as a proxy for the true best response and update the restricted distribution the same way.}
    \label{fig:RMBR_Diagram}
\end{figure*}

\begin{proposition}\label{prop:monotonic}
The exploitability of ADO is monotonically non-increasing. 
\end{proposition}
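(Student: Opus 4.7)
The plan is to reduce the exploitability of the ADO profile $\pi^{r,t}$ to a sum of restricted game values and then argue that each of those values is non-decreasing in $t$ because populations only grow.

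First, I would rewrite each player's best-response value against the opponent using zero-sum duality: for any mixed strategy $\pi_{-i}$,
\[
\max_{\pi'_i} v_i(\pi'_i, \pi_{-i}) \;=\; -\min_{\pi'_i} v_{-i}(\pi'_i, \pi_{-i}).
\]
Applied to the exploitability decomposition $e(\pi^{r,t}) = \max_{\pi'_1} v_1(\pi'_1, \pi^{r,t}_2) + \max_{\pi'_2} v_2(\pi^{r,t}_1, \pi'_2)$, this becomes
\[
e(\pi^{r,t}) \;=\; -\min_{\pi_2} v_1(\pi^{r,t}_1, \pi_2) \;-\; \min_{\pi_1} v_2(\pi_1, \pi^{r,t}_2).
\]
The key observation is that, by definition, $\pi^{r,t}_i$ attains the outer maximum in
\[
V^{i,t} \;=\; \max_{\pi_i \in \Pi^t_i}\min_{\pi_{-i}} v_i(\pi_i, \pi_{-i}),
\]
the value of the restricted game $G^i$ at iteration $t$ (eq. (\ref{restricted_game})). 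Hence $\min_{\pi_{-i}} v_i(\pi^{r,t}_i, \pi_{-i}) = V^{i,t}$, and therefore $e(\pi^{r,t}) = -V^{1,t} - V^{2,t}$.

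Second, I would observe that $V^{i,t+1} \ge V^{i,t}$ for each player $i$. Since ADO only adds strategies, $\Pi^{t+1}_i \supseteq \Pi^t_i$, so the outer maximum in $V^{i,t+1}$ is taken over a superset of the feasible set at time $t$, which cannot decrease the value. Combining gives
\[
e(\pi^{r,t+1}) \;=\; -V^{1,t+1} - V^{2,t+1} \;\le\; -V^{1,t} - V^{2,t} \;=\; e(\pi^{r,t}),
\]
which is the desired monotonicity.

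I do not anticipate a serious obstacle; the argument rests on (i) the identity $\min_{\pi_{-i}} v_i(\pi^{r,t}_i, \pi_{-i}) = V^{i,t}$, which is just the defining property of $\pi^{r,t}_i$ as the restricted player's equilibrium strategy in $G^i$ (this uses the minimax theorem on $G^i$, valid for finite extensive-form games with perfect recall via Kuhn's theorem), and (ii) the zero-sum sign swap. The subtlest point is the cross-indexing: the best-response value against $\pi^{r,t}_i$ is controlled by $V^{i,t}$ (the game in which $i$ is the restricted player), not by $V^{-i,t}$, so care is needed to match the superscripts on $V$ with the correct summands of exploitability.
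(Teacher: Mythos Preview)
Your proposal is correct and follows essentially the same approach as the paper: both arguments express the exploitability as $e(\pi^{r,t}) = -V^{1,t} - V^{2,t}$ where $V^{i,t} = \max_{\pi_i \in \Pi^t_i}\min_{\pi_{-i}} v_i(\pi_i,\pi_{-i})$ is the value of the restricted game $G^i$, and then use $\Pi^t_i \subseteq \Pi^{t+1}_i$ to conclude each $V^{i,t}$ is non-decreasing. The paper phrases the first step as $v_i(\pi^t_i,\mathbb{BR}_{-i}(\pi^t_i)) = V^{i,t}$ rather than spelling out the zero-sum sign swap, but the content is identical.
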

\begin{proof}
Let $\pi^t$ be the restricted NE in ADO at iteration $t$. Then for player $i$, since $\Pi^t_i \subseteq \Pi^{t+1}_i$
\begin{equation}
\begin{aligned}
    v_i(\pi^t_i, \mathbb{BR}_{-i}(\pi^t_i)) &= \max_{\pi_i \in \Pi_i^t}\min_{\pi_{-i}}v_i(\pi_i, \pi_{-i}) \\ 
    &\leq \max_{\pi_i \in \Pi^{t+1}_i}\min_{\pi_{-i}}v_i(\pi_i, \pi_{-i}) \\ 
    &= v_i(\pi^{t+1}_i, \mathbb{BR}_{-i}(\pi^{t+1}_i)).
\end{aligned}
\end{equation}
Since each player's value is monotonically non-decreasing, the value of the best response is non-increasing, and the exploitability of $\pi^t$ is also non-increasing:
\begin{equation}
\begin{aligned}
    e(\pi^{t+1}) :={}& -\sum_{i} v_i(\pi^{t+1}_i, \mathbb{BR}_{-i}(\pi^{t+1}_i)) \\
    \le{}& -\sum_{i} v_i(\pi^t_i, \mathbb{BR}_{-i}(\pi^t_i)) = e(\pi^t).
\end{aligned}
\end{equation}
\end{proof}

To illustrate this property of ADO, consider the algorithm dynamics on the DO bad case given in Figure \ref{fig:ADO_Diagram}. Similar to DO, ADO adds the second strategy to the population in the first iteration. Now, however, instead of taking the second strategy with probability 1 as DO does, ADO solves the restricted game where one player is restricted to the first two strategies and the other is unrestricted and can play all three strategies. The Nash equilibrium of this game for the restricted player is to play the first strategy with probability $\frac{2}{3}$ and to play the second strategy with probability $\frac{1}{3}$. This strategy results in a total exploitability of $\frac{4}{3}$, compared with the DO exploitability of $4$ and the initial ADO exploitability of $2$. In addition to this property of never increasing exploitability, ADO is guaranteed to converge to a Nash equilibrium, as shown below. 

\begin{proposition}\label{prop:converge}
When ADO terminates, the restricted NE of both players is a Nash equilibrium in the full game.  
\end{proposition}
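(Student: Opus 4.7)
The plan is to show that at termination, each $\pi^r_i$ is a maxmin strategy of the full game, so that $(\pi^r_1,\pi^r_2)$ is a Nash equilibrium. Write $V^i$ for the value of $G^i$ (so $V^i=\min_{\pi_{-i}} v_i(\pi^r_i,\pi_{-i})$ by construction), $V$ for the player-$1$ value of the full game, and $V^\Pi$ for the player-$1$ value of the restricted game on $(\Pi_1,\Pi_2)$. The elementary bounds $V^1\le V$ and $V^2\le -V$ always hold (restricting a maxmin to a subset cannot increase it), so it will suffice to prove $V^1+V^2\ge 0$; combined with the above, this forces $V^1=V$ and $V^2=-V$, which is exactly the maxmin characterization we need.

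The crux is to transfer information from $G^1$ and $G^2$ into the common restricted game on $(\Pi_1,\Pi_2)$. From the termination condition I would extract that every best response to $\pi^r_{-i}$ lies in the mixed span $\Delta(\Pi_i)$: no novel pure best response means the best-response polytope is contained in $\Delta(\Pi_i)$. In particular, the unrestricted-opponent NE strategy $\pi^i_{-i}$ of $G^i$ is a best response to $\pi^r_i$, and therefore belongs to $\Delta(\Pi_{-i})$; by the minimax theorem applied to $G^i$, this opponent strategy witnesses $V^i=\max_{\pi_i\in\Pi_i} v_i(\pi_i,\pi^i_{-i})$.

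Using $\pi^1_2\in\Delta(\Pi_2)$ as the outer minimizer in $V^\Pi=\min_{\pi_2\in\Delta(\Pi_2)}\max_{\pi_1\in\Delta(\Pi_1)} v_1$ gives $V^\Pi\le V^1$, while plugging $\pi^r_1\in\Delta(\Pi_1)$ into $V^\Pi=\max_{\pi_1\in\Delta(\Pi_1)}\min_{\pi_2\in\Delta(\Pi_2)} v_1$ gives $V^\Pi\ge V^1$, where the inner min equals $V^1$ because the unrestricted best response to $\pi^r_1$ already lies inside $\Delta(\Pi_2)$. Hence $V^\Pi=V^1$; by the symmetric computation on $G^2$, $V^\Pi=-V^2$; so $V^1+V^2=0$, as required.

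The main obstacle is conceptual rather than computational: $\pi^r_1$ and $\pi^r_2$ are produced by two different restricted games and, a priori, have no reason to mesh into a single equilibrium of the full game. The termination condition is precisely what lets us drag each $G^i$'s minimax witness back into $\Delta(\Pi_{-i})$, after which uniqueness of the NE value in the zero-sum game on $(\Pi_1,\Pi_2)$ closes the a priori gap $V^1+V^2\le 0$ to the required equality.
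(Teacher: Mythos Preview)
Your argument is correct, and it rests on the same key observation as the paper: at termination, every pure best response to $\pi^r_i$ already lies in $\Pi_{-i}$, so the unrestricted opponent's NE strategy $\pi^i_{-i}$ in $G^i$ belongs to $\Delta(\Pi_{-i})$. From there, however, the two arguments diverge. You introduce the auxiliary restricted game on $(\Pi_1,\Pi_2)$ with value $V^\Pi$ and sandwich it to obtain $V^1=V^\Pi=-V^2$, then combine with $V^1\le V\le -V^2$ to conclude that each $\pi^r_i$ is a full-game maxmin strategy. The paper instead works directly with payoffs: once $\pi'_1:=\pi^2_1\in\Delta(\Pi_1)$ and $\pi'_2:=\pi^1_2\in\Delta(\Pi_2)$, it chains four NE non-deviation inequalities
\[
v_1(\pi^r_1,\pi^r_2)\le v_1(\pi'_1,\pi^r_2)\le v_1(\pi'_1,\pi'_2)\le v_1(\pi^r_1,\pi'_2)\le v_1(\pi^r_1,\pi^r_2),
\]
forcing equality throughout and reading off directly that neither player can profitably deviate from $(\pi^r_1,\pi^r_2)$. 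The paper's route is shorter and avoids the auxiliary object $V^\Pi$; your value-theoretic route is a bit more roundabout but has the pleasant by-product of explicitly identifying $V^1=V=-V^2$, i.e.\ pinning the values of the two one-sided restricted games to the full-game value at termination.
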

\begin{proof}
Let $(\pi^r_1, \pi'_2)$ and $(\pi'_1, \pi^r_2)$ be the Nash equilibria in the restricted games $G^1$ and $G^2$ for player 1 and 2, respectively, upon termination. If $\pi_1'$ or $\pi_2'$ have support outside the population, ADO would not terminate, because there would exist another novel best response. So the support of both $\pi_1'$ and $\pi_2'$ must be inside the population, which makes them feasible for their respective player's restricted game. Then 
\begin{equation}
    \begin{aligned}
        v_1(\pi^r_1, \pi^r_2) &\le v_1(\pi_1', \pi^r_2) \\
        &\leq v_1(\pi_1', \pi_2') \\ 
        &\leq v_1(\pi^r_1, \pi_2') \\
        &\leq v_1(\pi^r_1, \pi^r_2).
    \end{aligned}
\end{equation}
The four inequalities follow, in order, because: (a) player 1 doesn't want to deviate from $\pi_1'$ to $\pi^r_1$ in $G^2$; (b) player 2 doesn't want to deviate from $\pi^r_2$ to $\pi_2'$ in $G^2$; (c) player 1 doesn't want to deviate from $\pi^r_1$ to $\pi_1'$ in $G^1$; and (d) player 2 doesn't want to deviate from $\pi_2'$ to $\pi^r_2$ in $G^1$.

Therefore, $v_1(\pi^r_1, \pi^r_2) = v_1(\pi_1', \pi^r_2)$ which implies that player 1 has no incentive to deviate from $\pi^r_1$ to $\pi_1'$ or any other strategy against $\pi^r_2$. A symmetric argument holds for player 2, implying that $\pi^r$ is a Nash equilibrium in the full game.
\end{proof}

\begin{figure*}[ht]
    \centering
    \includegraphics[width=\textwidth]{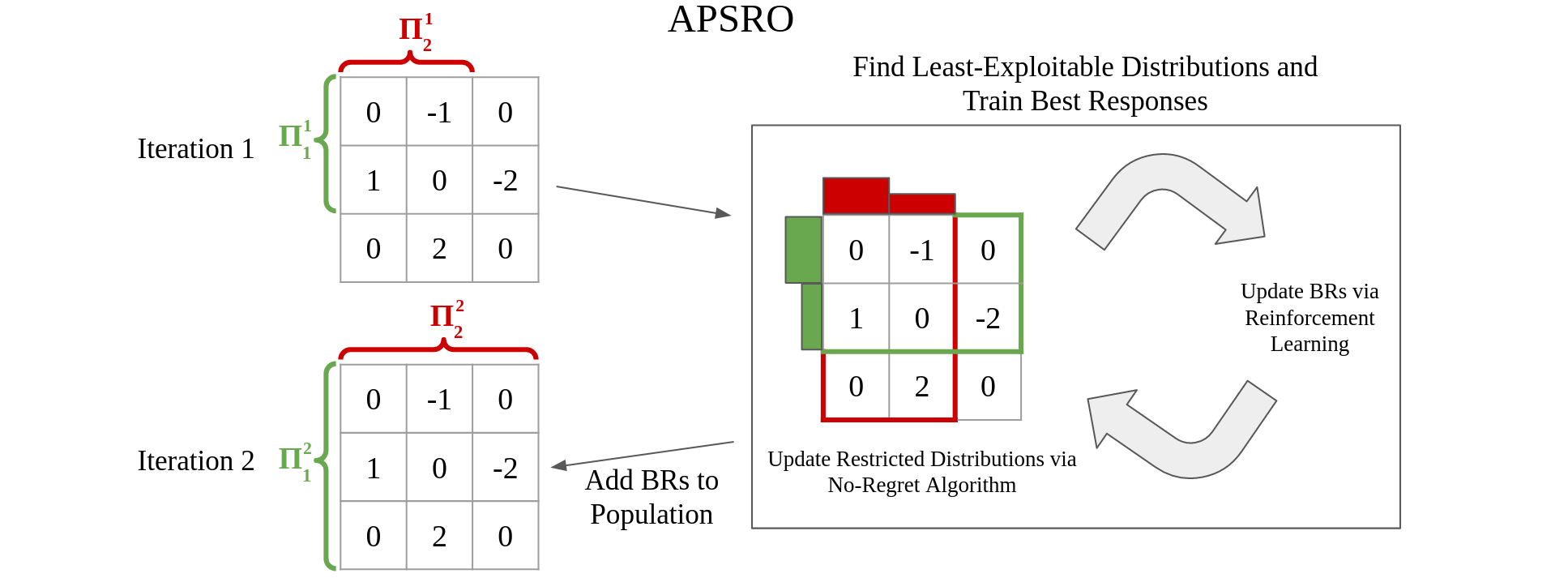}
    \caption{In APSRO, the following happens in one iteration: (1) Two restricted games are created where one player is unrestricted, one for each player. (2) For both players, a BR is trained against the restricted distribution while the restricted distribution is updated via a no-regret algorithm against this BR. (3) This BR is then added to the population.}
    \label{fig:APSRO_Diagram}
\end{figure*}

\section{Regret-Minimizing against a BR Double Oracle (RM-BR DO)}

In this section we propose a specific form of ADO called RM-BR DO, which finds  least exploitable restricted strategies $\pi^r_i$ for the restricted player of each restricted game. RM-BR DO uses a regret minimization vs. a best response (RM-BR) method that is similar to CFR-BR~\cite{johanson12cfrbr}: the restricted player uses regret minimization (RM) to find its mixed strategy, alternating with the unrestricted opponent finding best responses (BR). Interestingly, despite not computing the full restricted NE in each restricted game, RM-BR DO is guaranteed to not increase exploitability from one iteration to the next by more than the approximation errors of the RM and BR oracles. 

% Although the procedure involves many best responses, in the next section we avoid this issue by approximating RM-BR DO with a single best response that is constantly being updated with e.g. reinforcement learning. 

% First we describe the regret minimizing procedure used in the new DO algorithm. 
For a given game (later, this will be each restricted game), RM-BR \citep{johanson12cfrbr} works as follows. First, a regret-minimizing (RM) algorithm, such as regret matching \cite{hart2000simple}, is initialized. Then, in every iteration, a best response to the current RM strategy is calculated. Next, the RM strategy is updated with respect to this best response. In this work, we use the Exp3 update rule \cite{auer1995gambling} to update the RM strategy, but any regret-minimizing algorithm would work. RM-BR is described in Algorithm \ref{RM-BR}.

\begin{algorithm}
  \caption{RM-BR}
  \label{RM-BR}
  \begin{algorithmic}
  \STATE {\bfseries Result:} Approximately Least-Exploitable Restricted NE
   \STATE {\bfseries Input:} population $\Pi_i$
   \STATE Initialize $\pi^0$
   \FOR{$t=0,...,n-1$}
   \FOR{$i \in \{1,2\}$}
   \STATE Find $\beta_{-i} \gets \mathbb{BR}_{-i}(\pi^t_{i})$
   \STATE Update $\pi^{t+1}_i$ from $\pi^t_i$ via RM vs. $\beta_{-i}$
%   \STATE Sample a strategy from $\pi_{i}^r$\;
%   \STATE Observe reward against $\mathbb{BR}_{-i}(\pi^r_{i})$\;
%   \STATE Update $\pi^r_i$ via Exp3\; 
   \ENDFOR
   \ENDFOR 
 \STATE {\bfseries Return:} $\pi^n$
  \end{algorithmic}
\end{algorithm}

\begin{proposition}
RM-BR with a regret minimizing algorithm that has regret $R_t$ at iteration $t$ will output a distribution $\pi^n$ such that
% $\max_{\pi^*} u(\pi^*, \mathbb{BR}(\pi^*)) - u(\bar{\pi}_t, \mathbb{BR}(\bar{\pi}_t))
$e(\pi^n)
\leq \frac{R_n}{n}$.
% \todo{can we translate this into our notation? isn't this just $e(\pi^n)$?}
\end{proposition}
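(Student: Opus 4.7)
My plan is to interpret $\pi^n$ as the time-averaged iterate $\bar{\pi}_i = \tfrac{1}{n}\sum_{t=0}^{n-1} \pi^t_i$ produced by the regret-minimizer, and to interpret $e(\pi^n)$ as the exploitability of the restricted player in the unrestricted game relative to the value $v^\star = \max_{\pi_i \in \Pi_i}\min_{\pi_{-i}} v_i(\pi_i,\pi_{-i})$ of the restricted game $G^i$, i.e.\ $e(\pi^n) = v^\star - \min_{\pi_{-i}} v_i(\bar{\pi}_i,\pi_{-i})$. The argument is the standard ``regret vs.\ best-responder'' analysis underlying CFR-BR \cite{johanson12cfrbr}: combine the RM regret guarantee with one linearity step and one concavity-of-min step.

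The key chain of inequalities would proceed as follows. First, the definition of regret gives
\begin{equation*}
\max_{\pi^\star \in \Pi_i}\sum_{t=0}^{n-1} v_i(\pi^\star,\beta^t_{-i}) \;-\; \sum_{t=0}^{n-1} v_i(\pi^t_i,\beta^t_{-i}) \;\leq\; R_n.
\end{equation*}
Because $\beta^t_{-i}$ is an exact best response to $\pi^t_i$, we have $v_i(\pi^t_i,\beta^t_{-i}) = \min_{\pi_{-i}} v_i(\pi^t_i,\pi_{-i})$ for every $t$. By linearity of $v_i(\pi^\star,\cdot)$ in the opponent's mixed strategy followed by the minimax inequality,
\begin{equation*}
\max_{\pi^\star \in \Pi_i}\tfrac{1}{n}\sum_{t} v_i(\pi^\star,\beta^t_{-i}) = \max_{\pi^\star \in \Pi_i} v_i\bigl(\pi^\star,\bar{\beta}_{-i}\bigr) \geq \max_{\pi^\star \in \Pi_i}\min_{\pi_{-i}} v_i(\pi^\star,\pi_{-i}) = v^\star,
\end{equation*}
where $\bar{\beta}_{-i} = \tfrac{1}{n}\sum_t \beta^t_{-i}$. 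Finally, since the pointwise minimum of a sum dominates the sum of pointwise minima,
\begin{equation*}
\tfrac{1}{n}\sum_t \min_{\pi_{-i}} v_i(\pi^t_i,\pi_{-i}) \;\leq\; \min_{\pi_{-i}}\tfrac{1}{n}\sum_t v_i(\pi^t_i,\pi_{-i}) \;=\; \min_{\pi_{-i}} v_i(\bar{\pi}_i,\pi_{-i}).
\end{equation*}

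Chaining these three facts yields $v^\star \leq \min_{\pi_{-i}} v_i(\bar{\pi}_i,\pi_{-i}) + R_n/n$, which rearranges to the claimed bound $e(\pi^n) \leq R_n/n$. The one real obstacle is conceptual rather than technical: pinning down exactly what ``exploitability'' means when RM-BR returns a mixed strategy over one player's restricted pure-strategy set rather than a full game-profile as in the paper's earlier definition. Once the exploitability is identified with the gap between $v^\star$ and the value the opponent can drive $\bar{\pi}_i$ to in the full game, the remaining manipulations are essentially the classical self-play regret analysis specialized to a best-responding opponent, so no nontrivial calculation is required.
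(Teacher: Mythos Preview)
Your argument is correct and is precisely the standard ``regret vs.\ best-responder'' derivation underlying Theorem~3 of \cite{johanson12cfrbr}, which is exactly what the paper invokes: its entire proof is a one-line deferral to that theorem. You have simply unpacked the cited result, including the two interpretive choices (time-averaged iterate, one-sided exploitability relative to the restricted-game value $v^\star$) that the paper leaves implicit; those choices are the right ones and match how CFR-BR's guarantee is stated.
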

\begin{proof}
The proof follows the same argument as used in Theorem 3 of \citet{johanson12cfrbr}.
\end{proof}

% \begin{proof}
% Since player 1 has no regret
% $\max_{\pi} \frac{1}{t}\sum_t u(\pi, \mathbb{BR}(\pi_t)) - \frac{1}{t}\sum_t u(\pi_t, \mathbb{BR}(\pi_t)) \leq \frac{R_t}{t}$
% \end{proof}

% We can use RM-BR to compute a Nash equilibrium in the inner loop of ADO, which is what the algorithm we call RM-BR DO does.
In RM-BR DO, we use RM-BR to compute the restricted NE $\pi^r$ in each iteration of ADO. Formally, RM-BR DO keeps a population of pure strategies $\Pi^t$ at time $t$. In each iteration, $\pi^r$ is initialized to a uniform distribution over pure strategies in both populations, and in an inner loop updated against a series of best responses, as in RM-BR.
% there is a series of inner-loop iterations. In each inner-loop iteration, a best response $\mathbb{BR}(\pi^{r})$ to the current meta-distribution $\pi^{r}$ is computed for both players. The meta-distribution is then updated based on this best response for both players, using a regret-minimizing rule such as Exp3 \cite{auer1995gambling}. 
The final best response to the restricted NE is added as a strategy to the population for both players. %: $\Pi^{t+1}_i = \Pi^t_i \cup \mathbb{BR}_i(\pi^{r*}_{-i})$ for $i \in \{1, 2\}$. 
The algorithm terminates when the difference between the game value of each player's restricted game is less than $\epsilon$
% \todo{I think this should be $\epsilon$, because the difference may not be split 50-50}
, indicating that $\pi^r$ is an $\epsilon$-NE in the original game~\citep{hansen08onros}.
% neither player benefits more than $\epsilon$ from deviating from the meta-distribution to the BR, indicating that the meta-distribution is an $\epsilon$-NE also in the original game. 

\begin{algorithm}
 \caption{RM-BR DO}
 \begin{algorithmic}
 \STATE {\bfseries Result:} Approximate Nash Equilibrium
 \STATE {\bfseries Input:} Initial population $\Pi^0$
 \REPEAT[for $t=0,1,\ldots$]
 \STATE Compute restricted NE $\pi^r$\ via RM-BR\;
  \FOR{$i \in \{1,2\}$}
 \STATE Find $\beta_i \gets \mathbb{BR}_i(\pi^r_{-i})$
 \STATE $\Pi^{t+1}_i = \Pi^t_i \cup \{ \beta_i \}$
 \ENDFOR
 \UNTIL{No novel best response exists for either player}
 \STATE {\bfseries Return:} $\pi^r$
 \end{algorithmic}
\end{algorithm}

\begin{proposition}
Assume RM-BR DO runs sufficiently many RM-BR inner-loop updates in each iteration such that the exploitability in each restricted game is at most $\epsilon$. Then the exploitability of RM-BR DO will never increase by more than $2\epsilon$ from one iteration to the next. 
\end{proposition}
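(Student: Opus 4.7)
The plan is to mimic the proof of Proposition \ref{prop:monotonic}, which showed that each player's value against a best response is monotonically non-decreasing, but now to carefully track the two-sided slack introduced by using an $\epsilon$-approximate restricted NE inside each restricted game $G^i$ instead of an exact one.

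First I would fix notation: let $\pi^{r,t}_i$ denote player $i$'s restricted strategy at iteration $t$, let $V^{i,t} = \max_{\pi_i \in \Pi^t_i}\min_{\pi_{-i}} v_i(\pi_i, \pi_{-i})$ denote the value of the restricted game $G^i$ at iteration $t$, and write player $i$'s contribution to exploitability as $e_i(\pi^{r,t}) = -v_i(\pi^{r,t}_i, \mathbb{BR}_{-i}(\pi^{r,t}_i))$. From the assumption that the RM-BR inner loop produces a strategy whose exploitability in $G^i$ is at most $\epsilon$, I would extract the one-sided bound
\begin{equation}
V^{i,t} - \epsilon \;\le\; v_i(\pi^{r,t}_i, \mathbb{BR}_{-i}(\pi^{r,t}_i)) \;\le\; V^{i,t},
\end{equation}
where the upper bound is automatic because $\pi^{r,t}_i$ is supported on $\Pi^t_i$.

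Next I would invoke the fact that populations only grow: $\Pi^t_i \subseteq \Pi^{t+1}_i$ gives $V^{i,t} \le V^{i,t+1}$. Combining this with the two sides of the inequality above, the change in player $i$'s contribution to exploitability is bounded by
\begin{equation}
e_i(\pi^{r,t+1}) - e_i(\pi^{r,t}) = v_i(\pi^{r,t}_i, \mathbb{BR}_{-i}(\pi^{r,t}_i)) - v_i(\pi^{r,t+1}_i, \mathbb{BR}_{-i}(\pi^{r,t+1}_i)) \le V^{i,t} - (V^{i,t+1}-\epsilon) \le \epsilon.
\end{equation}
Summing over the two players yields $e(\pi^{r,t+1}) - e(\pi^{r,t}) \le 2\epsilon$, which is exactly the claim.

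The main obstacle is really just pinning down the interpretation of ``exploitability in each restricted game is at most $\epsilon$'' and converting it into the correct one-sided inequality on $v_i(\pi^{r,t}_i, \mathbb{BR}_{-i}(\pi^{r,t}_i))$; once that is done the rest is the same monotonicity argument as in Proposition \ref{prop:monotonic}, with the exact equalities replaced by $\epsilon$-approximate ones and a factor of $2$ appearing from summing across the two restricted games. If one wanted a tighter constant (say, $\epsilon$ instead of $2\epsilon$), one would need to assume the approximation quality aggregated across the two restricted games rather than holding per-game; the statement as written naturally gives $2\epsilon$.
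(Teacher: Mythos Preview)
Your proposal is correct and follows essentially the same route as the paper's proof: both use the chain $v_i(\pi^t_i,\mathbb{BR}_{-i}(\pi^t_i)) \le V^{i,t} \le V^{i,t+1} \le v_i(\pi^{t+1}_i,\mathbb{BR}_{-i}(\pi^{t+1}_i)) + \epsilon$ coming from the $\epsilon$-approximate optimality in each restricted game together with $\Pi^t_i \subseteq \Pi^{t+1}_i$, and then sum over the two players. Your write-up is in fact more explicit than the paper's, which displays only the per-player inequality and leaves the final summation implicit.
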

\begin{proof}
Let $\pi^t$ be the restricted NE of RM-BR DO at iteration $t$. 
% For each player $i$, let $\pi'_{-i}$ be the unrestricted opponent's NE in player $i$'s restricted game $G^i$. $\pi^t_i$ is an $\epsilon$-best response to $\pi'_{-i}$, since it is part of an $\epsilon$-exploitable profile. 
Then since $\Pi^t_i \subseteq \Pi^{t+1}_i$
\begin{equation}
\begin{aligned}
    v_i(\pi^t_i, \mathbb{BR}_{-i}(\pi^t_i)) &\leq \max_{\pi_i \in \Pi_i^t}\min_{\pi_{-i}}v_i(\pi_i, \pi_{-i}) \\ 
    &\leq \max_{\pi_i \in \Pi^{t+1}_i}\min_{\pi_{-i}}v_i(\pi_i, \pi_{-i}) \\ 
    % &\leq v_i(\pi^{t+1}_i, \pi'_{-i}) + \epsilon \\
    &\leq v_i(\pi^{t+1}_i, \mathbb{BR}_{-i}(\pi^{t+1}_i)) + \epsilon.
\end{aligned}
\end{equation}
\end{proof}

% \begin{theorem}
% Assume every iteration we run RM-BR in the inner loop with sufficient inner loop iterations such that $\max_{\pi^*} u(\pi^*, \mathbb{BR}(\pi^*)) - u(\bar{\pi}_t, \mathbb{BR}(\bar{\pi}_t))  \leq \epsilon$. Then RM-BR PSRO converges to an $\epsilon$-Nash equilibrium.
% \end{theorem}
% \begin{proof}
% \textbf{TDO: Does this actually converge?}
% % Upon convergence (which is guaranteed because the game is finite), both players will be playing an $\epsilon$-exploitable strategy in the full game.
% \end{proof}

\section{Anytime PSRO (APSRO)}
Although the exploitability of RM-BR DO will never increase much, it needs to compute many different best responses in each inner loop iteration. This can quickly become infeasible, especially in large games. In this section we introduce a scalable version of ADO, called Anytime PSRO (APSRO). 

\begin{algorithm}
\caption{Anytime PSRO}
\begin{algorithmic}
 \STATE {\bfseries Result:} Approximate Nash Equilibrium
 \STATE {\bfseries Input:} Initial population $\Pi^0$
 \WHILE{Not Terminated \{$t=0,1,\ldots$\}}
 \STATE Initialize $\pi^r_i$ to uniform over $\Pi^t_i$ for $i \in \{1, 2\}$\;
 \FOR{$i \in \{1,2\}$} 
 \FOR{$n$ iterations}
 \FOR{$m$ iterations}
 \STATE Update policy $\beta_{-i}$ toward $\mathbb{BR}_{-i}(\pi^r_{i})$
 \ENDFOR
 \STATE Update $\pi^r_i$ via regret minimization vs. $\beta_{-i}$\;
%  \STATE Sample a strategy from $\pi^r_{i}$\;
%  \STATE Observe reward against $\mathbb{\hat{BR}}_{-i}(\pi^r_{i})$\;
%  \STATE Update $\pi^r_i$ via Exp3\;
 \ENDFOR
 \ENDFOR
 \STATE $\Pi^{t+1}_i = \Pi^t_i \cup \{\beta_i\}$ for $i \in \{1, 2\}$\;
 \ENDWHILE 
 \STATE {\bfseries Return:} $\pi^r$
\end{algorithmic}
\end{algorithm}

% \begin{figure*}[t!]
%     \centering
%     \begin{subfigure}
%     \centering
%         \includegraphics[width=.6\textwidth]{}
%         \caption{}%2-Clone Leduc Wall Times}
%         \label{fig:psro_counterexample}
%     \end{subfigure}
%     \begin{subfigure}
%         \centering
%         \includegraphics[width=.6\textwidth]{}
%         \caption{}%40-Clone Leduc Episodes}
%         \label{fig:random}
%     \end{subfigure}
%     \begin{subfigure}
%         \centering
%         \includegraphics[width=.6\textwidth]{}
%         \caption{}%2-Clone Leduc information sets Expanded}
%         \label{fig:Leduc}
%     \end{subfigure}
%     \label{fig:searchnodes}
%     \caption{(a) In the PSRO Counterexample, PSRO tends to increase exploitability until the last iteration while ADO decreases exploitability every iteration.; (b) In random normal form games ADO tends to decreas exploitability every iteration while PSRO might increase it. (c) ADO outperforms PSRO in Leduc Poker.}
% \end{figure*}

\begin{figure*}[t]
    \centering
    \begin{subfigure}[b]{0.33\textwidth}
        \centering
        \includegraphics[width=\textwidth]{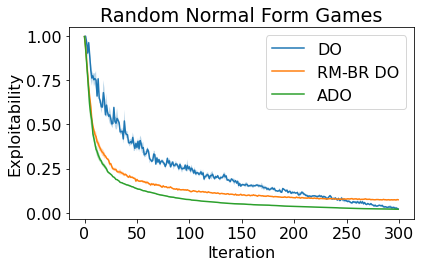}
        \caption{Random Normal Form Games}%40-Clone Leduc Episodes}
        \label{fig:random}
    \end{subfigure}
    \begin{subfigure}[b]{0.33\textwidth}
        \centering
        \includegraphics[width=\textwidth]{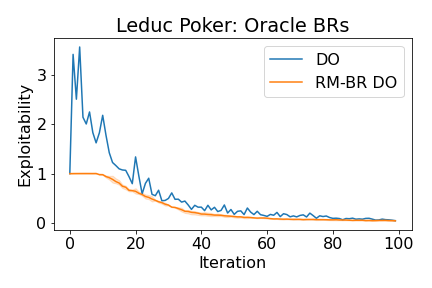}
        \caption{Leduc with Oracle Best Responses}
        \label{fig:leduc_oracle}
    \end{subfigure}
    \begin{subfigure}[b]{0.33\textwidth}
        \centering
        \includegraphics[width=\textwidth]{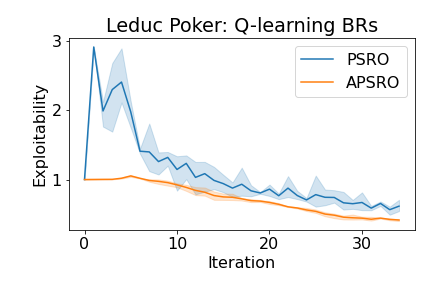}
        \caption{Leduc with Q-Learning Best Responses}
        \label{fig:leduc_q}
    \end{subfigure}
    \caption{Tabular Methods}
\end{figure*}

Instead of recomputing a best response in every inner-loop iteration, APSRO maintains one approximate best response RL policy and updates it for a small number of steps in each inner-loop iteration. 
% This provides a general version of ADO that can be sped up greatly in practice.
% To recover ADO, APSRO must update the approximate best response enough to converge to an actual best response. 
To speed up the algorithm, APSRO can update the approximate best response in fewer iterations than are necessary to get a full best response. This hyperparameter for the number of best-response updates per every Exp3 update trades off between theoretical guarantees and speed. 

The updates to the best response can be made through a variety of algorithms. In this paper we show experiments with updates via tabular Q-learning as well as deep reinforcement learning.

% \section{Anytime XDO (AXDO)}
% \begin{algorithm}
%     \caption{Anytime XDO}
%     \label{alg:AXDO} 
% \begin{algorithmic}

% \STATE {\bfseries Input:} initial population $\Pi^0$
% \WHILE{Not Terminated}
%     \STATE Define restricted game for $\Pi^t$ via eq. (\ref{restricted_game})
%     \FOR{$t=1,...,n$}
%         \FOR{$i \in \{1,2\}$} 
%             \FOR{$j=1,...,m$}
%                 \STATE Update $\mathbb{\hat{BR}}_{-i}(\pi^r_{i})$ toward $\mathbb{BR}_{-i}(\pi^r_{i})$
%             \ENDFOR
%             \STATE Update $\pi^r_i$ via CFR-BR against $\mathbb{\hat{BR}}_{-i}(\pi^r_{i})$
%         \ENDFOR
%     \ENDFOR
%     % \STATE Get $\epsilon$-NE policy $\pi^{r*}$ of restricted game
%     % \STATE Find $\mathbb{BR}_i(\pi^{r*}_{-i})$ for $i \in \{1,2\}$ 
%     \IF{$v_i(\mathbb{BR}_i(\pi^{r}_{-i}), \pi^{r}_{-i}) \le v_i(\pi^{r}) + \epsilon$ for both $i$}
%         \STATE Terminate
%     \ENDIF
%     \STATE $\Pi^{t+1}_i = \Pi^t_i \cup \mathbb{BR}_i(\pi^{r}_{-i})$ for $i \in \{1, 2\}$
% \ENDWHILE
% \end{algorithmic}
% \end{algorithm}

\section{Experiments}\label{sec:exp}
We report results on normal form games, Leduc poker, Goofspiel, and a continuous-action game introduced in \citet{mcaleer2021xdo}. Details of these games are included in the appendix. 

\subsection{Experiments with Tabular Methods}
In this section we report ADO and RM-BR DO results with oracle best responses and tabular Q learning best responses. In all ADO experiments we solve the restricted games using a linear program.   

\subsubsection{DO Bad Case}

% \begin{figure}[ht!]
%         \centering
%         \includegraphics[width=0.3\textwidth]{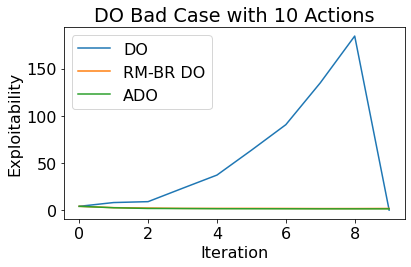}
%         \caption{DO Bad Case}%40-Clone Leduc Episodes}
%         \label{fig:DO_Counterexample}
% \end{figure}
To demonstrate how DO can increase exploitability every iteration besides the last, we present a generalization of the game presented in Figure \ref{fig:ADO_Diagram}. Consider a game where all values are $0$, except if the row index $r$ is one more than the column index $c$, in which case the value is $\sum_{i=0}^{r} 2^i + 2i$, or if the column index $c$ is one more than the row index $r$, 
in which case the value is $\sum_{i=0}^c -2^i + 2i$. 
\begin{wrapfigure}{r}{0.23\textwidth}
  \begin{center}
    \includegraphics[width=0.23\textwidth]{}
  \end{center}
    \caption{DO Bad Case}%40-Clone Leduc Episodes}
    \label{fig:DO_Counterexample}
\end{wrapfigure}
% Assume we initialize the population with only the first row and column strategy. Then each iteration, DO will add as a best response the next highest index row and column. The restricted NE computed over these strategies will tend to put roughly half of the probability mass on the most-recent pure strategy, which is highly exploitable by the next best response one index higher. 
We plot the performance of DO and ADO in this game with 9 actions in Figure \ref{fig:DO_Counterexample} and show that DO increases exploitability in every iteration except the last, while ADO does not increase exploitability.

\subsubsection{Random Normal-Form Games}
To create random normal-form games, we sample values from Uniform(0,1). Figure \ref{fig:random} shows results on random normal-form games with 500 actions. We see that ADO greatly outperforms DO and tends to not increase exploitability. 

\begin{figure*}[t]
    \centering
    \begin{subfigure}[b]{0.33\textwidth}
        \centering
        \includegraphics[width=\textwidth]{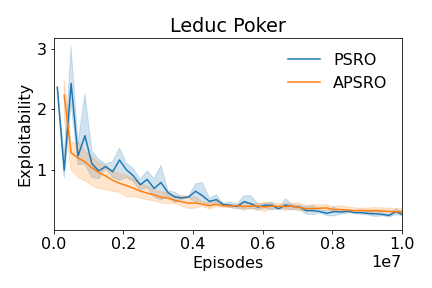}
        \caption{Leduc with DDQN BRs}
        \label{fig:leduc_APSRO}
    \end{subfigure}
    \begin{subfigure}[b]{0.33\textwidth}
        \centering
        \includegraphics[width=\textwidth]{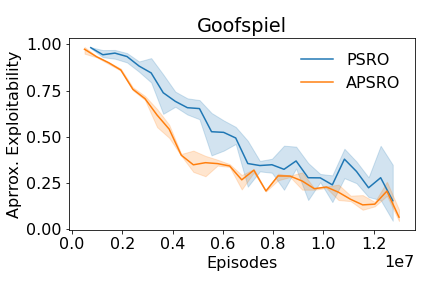}
        \caption{Goofspiel with DDQN BRs}
        \label{fig:goofspiel}
    \end{subfigure}
    \begin{subfigure}[b]{0.33\textwidth}
        \centering
        \includegraphics[width=\textwidth]{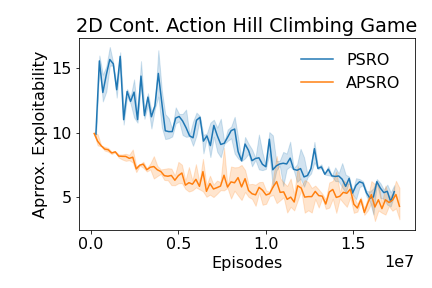}
        \caption{Continuous-Action Hill-Climbing Game}
        \label{fig:hill}
    \end{subfigure}
    \caption{Deep RL Methods}
\end{figure*}

\subsubsection{Leduc Poker}
We test RM-BR DO on Leduc poker in OpenSpiel~\cite{lanctot2019openspiel} and compare it with DO. As shown in Figure \ref{fig:leduc_oracle}, RM-BR DO outperforms DO and does not increase exploitability, but the performance of the two algorithms converges at the end. 
% Also note that RM-BR DO must compute a best response in every inner-loop iteration, while only the outer-loop iterations are plotted on this figure. 
Although RM-BR DO requires many inner-loop best-response iterations, it is guaranteed to not increase exploitability much from one iteration to the next, and APSRO approximates this process.

\subsubsection{Leduc Poker with Tabular Q Learning Best Responses}
In this section we have a tabular Q learning agent as the best response. Similar to the ADO results, in Figure \ref{fig:leduc_q} we see that APSRO is able to perform better than PSRO and does not increase exploitability. However, as in PSRO, the ending exploitability is still much higher than the exploitability that a method like CFR \cite{zinkevich2008regret, brown2019solving, farina2019stable} would achieve in Leduc.

\subsection{Experiments with Deep Reinforcement Learning}
In this section we use deep reinforcement learning for the best response for both APSRO and PSRO, and we present results on Leduc poker, Goofspiel, and a continuous-action hill climbing game. Training details are described in the appendix. To calculate approximate exploitability in Goofspiel and the continuous-action hill climbing game, we train reinforcement learning best responses from scratch against checkpoints. 
% This gives a lower bound on the true exploitability.  

\subsubsection{Leduc Poker}
As shown in Figure \ref{fig:leduc_APSRO}, APSRO with deep RL best responses performs competitively with PSRO but has a much smoother exploitability curve. 
PSRO with deep RL best responses exhibits a similar fluctuations in exploitability as previously demonstrated with PSRO with tabular Q learning best responses. However, APSRO starts out at a higher exploitability than with tabular Q learning. We conjecture that this is likely due to different initialization between tabular and neural best-response methods. We use DDQN \cite{van2016deep} for the deep RL best responses.
% because deep RL policies end up having more entropy and as a result the meta-Nash is less exploitable in certain games like Leduc poker.   

\subsubsection{Goofspiel}
We experiment on the standard Goofspiel game in Openspiel \cite{lanctot2019openspiel} with 13 cards and simultaneous moves.  As shown in figure \ref{fig:goofspiel}, APSRO with deep RL best responses outperforms PSRO and has a slightly smoother exploitability curve. 
% \ml{Is this the standard game of Goofspiel.. with 13 cards? The simulataneous-move game or imperfect information game? (If you used OpenSpiel for this, I recommend putting the game string in the appendix)}
Goofspiel is a much larger game than Leduc poker, suggesting that APSRO can scale to large games. We use DDQN \cite{van2016deep} for the deep RL best response. 
% Approximate exploitability of every second PSRO and APSRO iteration is measured.
% \ml{The y-axis of Figure~\ref{fig:goofspiel} says ``Approximate Exploitability''. Can we say how those values were obtained? Also, can we explain the ``(projection)'' in the title of the same figure?}

\subsubsection{Continuous-Action Hill-Climbing Game}
The 2-dimensional continuous-action hill-climbing game was introduced in \cite{mcaleer2021xdo} as a simple continuous-action game where PSRO performs relatively poorly. As shown in figure \ref{fig:hill}, PSRO indeed increases exploitability many times and does not reach the same exploitability as a random initialization until roughly 5,000,000 episodes. Conversely, APSRO rarely increases exploitability and stays below PSRO throughout the entire experiment. We use PPO \cite{schulman2017proximal} for the deep RL best response.

\section{Discussion}
We introduced ADO, a modification of DO that does not increase exploitability much from one iteration to the next. We also introduced RM-BR DO, a modification of ADO that finds the least exploitable restricted distribution via a no-regret procedure. Finally, we introduced APSRO, a procedure that approximates ADO but can handle reinforcement learning best responses. As shown in our experiments, ADO and RM-BR PSRO outperform DO on normal form games and APSRO outperforms PSRO on Leduc poker. Existing approaches to improving PSRO could potentially be combined with our method \citep{mcaleer2021xdo, balduzzi2019open, smith2021iterative, feng2021discovering, mcaleer2020pipeline, wright2019iterated}.

Regarding worst-case convergence time, it is known that some games could have long sequences of subgames, even when the unrestricted subgames are fully solved~\citep{hansen08onros}. This issue is a potential problem for both ADO and DO.
How one could reduce the size of these worst-case sequences is still an open question, and an interesting direction for future work. 

One downside of ADO compared to DO is that it requires best responses to be novel best responses, and can get stuck at high exploitability if best responses are not chosen to be novel. This could pose a problem in APSRO, where finding novel best responses could be difficult. One possible method to encourage novelty would be to include a diversity reward for the BR. 
% Despite these concerns, we have found APSRO to demonstrate better empirical performance than PSRO.

% While APSRO guarantees a monotonic decrease in exploitability, it is computational inefficient because a new best response needs to be computed in every inner loop iteration. AAPSRO addresses this issue by continually adjusting a best response already pretrained from each previous inner loop iteration. AAPSRO thus efficiently minimizes the exploitability of the meta-distribution while keeping increased computational costs compared to PSRO small. Although we only show results for tabular Q-learning, we expect AAPSRO to be able to scale well with larger deep reinforcement learning best responses. 

\bibliography{main.bib}
\bibliographystyle{icml2021}

%%%%%%%%%%%%%%%%%%%%%%%%%%%%%%%%%%%%%%%%%%%%%%%%%%%%%%%%%%%%%%%%%%%%%%%%%%%%%%%
%%%%%%%%%%%%%%%%%%%%%%%%%%%%%%%%%%%%%%%%%%%%%%%%%%%%%%%%%%%%%%%%%%%%%%%%%%%%%%%
% DELETE THIS PART. DO NOT PLACE CONTENT AFTER THE REFERENCES! (JB: only ICML 2021 says this, 2022 says you can have an appendix here)
%%%%%%%%%%%%%%%%%%%%%%%%%%%%%%%%%%%%%%%%%%%%%%%%%%%%%%%%%%%%%%%%%%%%%%%%%%%%%%%
%%%%%%%%%%%%%%%%%%%%%%%%%%%%%%%%%%%%%%%%%%%%%%%%%%%%%%%%%%%%%%%%%%%%%%%%%%%%%%%

%%%%%%%%%%%%%%%%%%%%%%%%%%%%%%%%%%%%%%%%%%%%%%%%%%%%%%%%%%%%%%%%%%%%%%%%%%%%%%%
%%%%%%%%%%%%%%%%%%%%%%%%%%%%%%%%%%%%%%%%%%%%%%%%%%%%%%%%%%%%%%%%%%%%%%%%%%%%%%%
% APPENDIX
%%%%%%%%%%%%%%%%%%%%%%%%%%%%%%%%%%%%%%%%%%%%%%%%%%%%%%%%%%%%%%%%%%%%%%%%%%%%%%%
%%%%%%%%%%%%%%%%%%%%%%%%%%%%%%%%%%%%%%%%%%%%%%%%%%%%%%%%%%%%%%%%%%%%%%%%%%%%%%%
\newpage
\appendix
\onecolumn

\section{No-Regret Algorithms}
\subsection{Exp3}
The exponential-weight algorithm for exploration and exploitation (Exp3) \cite{auer1995gambling} is an adversarial bandit method that has sublinear regret. We used the EXP3.P variant from \cite{bubeck2012regret}.
% The following pseudocode is based on \cite{lattimore_szepesvari_2020}.

\begin{algorithm}[h]
\caption{Exp3}
\label{Exp3}
\begin{algorithmic}
 \STATE {\bfseries Input:} $n$ iterations, $k$ actions, learning rate $\eta$
 \STATE Initialize cumulative rewards $\hat{S_0} = (0,0,...0)$
 \FOR{$t=1,...,n$}
 \STATE Calculate the sampling distribution $P_{t,i}$: $P_{t,i} = (1-\gamma)\frac{exp(\eta\hat{S}_{t-1,i})}{\sum^k_{j=1}exp(\eta\hat{S}_{t-1,j})} + \frac{\gamma}{k}$ for each $i \in [1..k]$
 \STATE Sample action $A_t \sim P_t$ and observe reward $X_t$
 \STATE Calculate $\hat{S}_{t,i}$: $\hat{S}_{t,i} = \hat{S}_{t-1,i} + \frac{X_t \mathbbm{1}\{A_t=i\}}{P_{t,i}}$
%  \STATE Calculate $\hat{S}_{t,i}$: $\hat{S}_{t,i} = \hat{S}_{t-1,i} + \mathbbm{1}\{A_t=i\}(\frac{X_t}{P_{t,i}})$
 \ENDFOR
\end{algorithmic}
\end{algorithm}

\subsection{Multiplicative Weights Update}
The Multiplicative Weights Update (MWU) algorithm is an online learning method which converges in time-average to Nash equilibrium \cite{cesa2006prediction, freund1999adaptive}.  

\begin{algorithm}[h]
\caption{Multiplicative Weights Update}
\label{MWU}
\begin{algorithmic}
 \STATE {\bfseries Input:} $n$ iterations, $k$ actions, learning rate $\eta$
 \STATE Initialize cumulative rewards $\hat{S_0} = (0,0,...0)$
 \FOR{$t=1,...,n$}
 \STATE Calculate the sampling distribution $P_{t,i}$: $P_{t,i} = \frac{exp(\eta\hat{S}_{t-1,i})}{\sum^k_{j=1}exp(\eta\hat{S}_{t-1,j})}$ for each $i \in [1..k]$
 \STATE Observe reward $X_{t,i}$ for each action $i \in [1..k]$
%  \STATE Calculate $\hat{S}_{ti}$: $\hat{S}_{ti} = \hat{S}_{t-1,i} + 1 - \frac{\mathbbm{1}\{A_t=i\}(1-X_t)}{P_{ti}}$
 \FOR{$i=1,...k$}
 \STATE Calculate $\hat{S}_{t,i}: \hat{S}_{t,i} = \hat{S}_{t-1,i} + X_{t,i}$
 \ENDFOR
 \ENDFOR

\end{algorithmic}
\end{algorithm}

\section{Training Details}

\subsection{Tabular Experiments}
In tabular experiments on Leduc (\ref{fig:leduc_oracle} and \ref{fig:leduc_q}), we used the OpenSpiel library~\cite{lanctot2019openspiel}. 

In both experiments, we find the restricted player's policy using EXP3 as the regret minimization method. We used the EXP3.P variant of EXP3 \citep{bubeck2012regret}. Each iteration, we set $\gamma$ to $min(1, \frac{\sqrt{k * \log k}}{(e-1) * n})$ where $n$ is an estimated upper bound on the total cumulative regret upon completion of the algorithm.

Each iteration of the RM-BR DO and APSRO algorithms consisted of many EXP3 updates and BR updates (i.e. finding an exact BR for the former, and running episodes of Q-learning for the latter). In the Leduc RM-BR DO experiments (\ref{fig:leduc_oracle}), each iteration consisted of 100,000 EXP3 updates and 100 BR updates (1 BR update per 1,000 EXP3 updates). In the tabular Leduc APSRO experiments (\ref{fig:leduc_q}), each iteration consisted of 50,000 EXP3 updates and 500,000 tabular Q-learning episodes. EXP3 updates were run in batches of 10, and Q-learning episodes were run in batches of 100.  

For APSRO, we used OpenSpiel's Python implementation of a tabular epsilon-greedy Q-learner. All hyperparameters were default: step size~$= 0.1$ and epsilon~$= 0.2$.

To determine the EXP3 reward for the restricted player's sampled strategy, we find the exact expected payoff from playing that strategy against the current opponent BR, by traversing the game tree.

\subsection{Neural Experiments}
For neural experiments, we use the same RL best response hyperparameters in both PSRO and APSRO as well as in the measurement of approximate exploitability. When RL best responses are calculated, an independent best response learning process is performed for each player. RL hyperparameters for each game were selected based on sample efficiency and final performance against a fixed opponent.

New entries in the PSRO empirical payoff table are calculated by playing 3000, 3000, and 1000 evaluation episodes for each pair of population policies in Leduc, Goofspiel, and the 2D Cont. Action Hill-Climbing Game respectively. These episodes are not counted as experience collected for PSRO.

In deep RL experiments, we use the Multiplicate Weights Update (MWU) algorithm as our no-regret algorithm for APSRO. For all games, we perform a single metasolver update on every 10th iteration of our RL best response algorithm. We use a learning rate of 0.1. MWU expected payoffs for each population strategy are estimated using the average return over the last 1000 episodes in which that given population strategy was played. When measuring approximate exploitability, we measure a best response's payoff against the time-average MWU policy from a single APSRO iteration.

Our deep RL code was built on top of the RLlib framework \cite{rllib}, and any hyperparameters not specified are the version 1.0.1 defaults.

\begin{table}[H]
\centering
\begin{tabular}{ll}
Algorithm & DDQN \cite{van2016deep} \\
circular replay buffer size & 2e5 \\
prioritized experience replay & No \\
total rollout experience gathered each iter & 1024 steps \\
learning rate & 0.01 \\
batch size & 4096 \\
TD-error loss type & MSE \\
target network update frequency & every 10,000 steps \\
MLP layer sizes & [128, 128] \\
Activation function & ReLu \\
best response stopping condition & 1e5 episodes\\
\end{tabular}
\caption{Leduc Deep RL BR hyperparameters}
\label{table:ddqn-leduc}
\end{table}

\begin{table}[H]
\centering
\begin{tabular}{ll}
Algorithm & DDQN \cite{van2016deep} \\
circular replay buffer size & 2e5 \\
prioritized experience replay & No \\
total rollout experience gathered each iter & 1024 steps \\
learning rate & 0.002 \\
batch size & 2048 \\
TD-error loss type & MSE \\
target network update frequency & every 1e5 steps \\
MLP layer sizes & [128, 128, 128] \\
Activation function & ReLu \\
best response stopping condition & 1.5e6 timesteps\\
\end{tabular}
\caption{Goofspiel Deep RL BR hyperparameters}
\label{table:ddqn-goofspiel}
\end{table}

\begin{table}[H]
\centering
\begin{tabular}{ll}
Algorithm & PPO \cite{schulman2017proximal} \\
GAE $\lambda$ & 1.0 \\
entropy coeff & 0.01 \\
clip param & 0.2 \\
KL target & 0.01 \\
learning rate & 5e-4 \\
train batch size & 2048 \\
SGD minibatch size & 256 \\
num SGD epochs on each train batch & 30 \\
separate policy and value networks & Yes \\
continuous action range & [-1.0, 1.0] for each dim \\
MLP layer sizes & [32, 32] \\
Activation function & Tanh \\
best response stopping condition & 1e5 episodes\\
\end{tabular}
\label{Tab:ppo-params}
\caption{2D Continuous-Action Hill Climbing Game Deep RL BR hyperparameters}
\end{table}

\section{Experiment Code}
A GitHub link for our experiment code will be made available with the deanonymized version of this work.

\end{document}